\documentclass[a4paper,12pt,reqno,nofootinbib]{revtex4}

\usepackage[centertags]{amsmath}
\usepackage{amsfonts}
\usepackage{amssymb}
\usepackage{amsthm}
\usepackage{newlfont}
\usepackage{stmaryrd}
\usepackage{mathrsfs}
\usepackage{euscript}
\usepackage{graphicx}
\usepackage{enumerate}
\usepackage{tikz}
\usepackage{pgf}
\usetikzlibrary{positioning,fit,calc}
\usetikzlibrary{arrows,automata}
\usepackage{wrapfig}
\usepackage{subfigure}
\usepackage{amscd}
\usepackage{hyperref}


\theoremstyle{plain}
  \newtheorem{theorem}{Theorem}[section]
  
  \newtheorem{proposition}[theorem]{Proposition}
  
\theoremstyle{definition}

\theoremstyle{remark}





\newcommand{\opunit}{\text{1}\kern-0.22em\text{l}}



\DeclareMathAlphabet{\mathpzc}{OT1}{pzc}{m}{it}



\newcommand{\id}{\textrm{d}}




\begin{document}

\title{{\bf What decides the direction of a current? }}

\author{Christian Maes}
\affiliation{Instituut voor Theoretische Fysica, KU Leuven}

\begin{abstract}
Nonequilibria show currents that are maintained as the result of a steady driving. We ask here what decides their direction.  It is not only the second law, or the positivity of the entropy production that decides; also non-dissipative aspects often matter and sometimes completely decide.
\end{abstract}

\maketitle

\begin{center}
{\bf Dedicated in honor of Lucio Russo}
\end{center}
\baselineskip=20pt
\section{Introduction}
Predicting the course of events given the present state of affairs is part of scientific practice.  In what direction things will evolve is however not always so evident.  In thermodynamics there are a number of general rules of thumb derived from the principal laws.  For instance, macroscopic systems tend to equilibrate at the same temperature, chemical potential and pressure as the surroundings; relaxation (or time itself) flows in the direction of increasing entropy etc. In mechanics the ambition is even bigger; we compute trajectories given the present state. Statistical mechanics is supposed to transfer mechanical laws to thermodynamic behavior, with the attenuendo that some thermodynamic principles are not absolute but become statistical.  For example, the Boltzmann equation for a dilute gas has a direction of time, but for mesoscopic systems fluctuations can be expected, and as Maxwell was emphasizing, 
\begin{quote}
The truth of the second law is ... a statistical, not a mathematical, truth, for it depends on the fact that the bodies we deal with consist of millions of molecules... Hence the second law of thermodynamics is continually being violated, and that to a considerable extent, in any sufficiently small group of molecules belonging to a real body.
\end{quote}
(J.C. Maxwell, 1878)\\
That is, statistical mechanics will not only derive thermodynamics, it will also correct it and extend it.  That is especially true for nonequilibrium statistical mechanics as we are dealing there necessarily with un-typicial behavior from the point of view of the micro-canonical ensemble.  It becomes therefore both a major inspiration and application of probability theory, exactly in the way Lucio Russo has been enjoying it and contributing to it.\\
Going to irreversible thermodynamics \cite{Groot}, that is the thermodynamics for irreversible phenomena, the main guiding principle that survives for the direction of currents is the positivity of the entropy production.  We are for example considering an open macroscopic system which is being steadily frustrated by contacts with different equilibrium baths.  There will be currents maintained, at least on the time scales where the environment be kept at the same intensive values (e.g. temperature).  The directions of these currents can and will vary with different arrangements, but the entropy production $\Sigma$  is positive. That $\Sigma =\sum_\alpha J_\alpha F_\alpha$ is a sum over all possible types  of channels of transport of the product of currents (or displacements) $J_\alpha$ and thermodynamic forces $F_\alpha$.  For predicting the current directions, we just see what is compatible with $\Sigma\geq 0$, nothing more.  In the linear regime, where currents are proportional to forces, $J_\alpha=\sum_\gamma L_{\alpha\gamma}\,F_\gamma$ with symmetric\footnote{We ignore here the Casimir correction that takes into account the parity under time-reversal of the physical quantity being transported.} Onsager linear response coefficients $L_{\alpha\gamma}=L_{\gamma\alpha}$, and the positivity of $\Sigma$ is the positivity of the matrix $\left(L_{\alpha\gamma}\right)$.  Here again, statistical mechanics will derive and extend that scheme, but now it should be a nonequilibrium statistical mechanics.  That is very much unfinished business, and certainly for going beyond the linear regime around zero thermodynamic forces.
In fact, nonequilibrium statistical mechanics is far behind its equilibrium version, 
\begin{quote}
My inclination is to postpone the study of the large--system limit: Since it is feasible to
analyze the nonequilibrium properties of finite systems --- as Gibbs did for their equilibrium
properties --- it seems a good idea to start there. That may not answer all questions, but
it advances nonequilibrium statistical mechanics to the point equilibrium had reached after
Gibbs.
\end{quote}
{\it Conversations on Nonequilibrium Physics With an Extraterrestrial}, David Ruelle, Physics Today {\bf 57}(5),
48 (2004).\\
In other words, a general theory of nonequilibrium phase transitions or of universality is still out, and
even a systematic way of dealing with many-body effects is largely lacking.  We have certainly no percolation or geometric picture
of nonequilibrium collective phenomena, and remembering the crystal clear and perfectly elegant contributions of Lucio to percolation theory and to mathematical statistical mechanics, we can only hope that the day will soon come where such a mathematical framework and geometric interpretation will become available also for nonequilibrium physics to match Lucio's standards.\\

In what follows we are asking about what determines the direction of a nonequilibrium current.  The main point will be that it is certainly not always the case that the current direction is decided by the positivity of the entropy production; non-dissipative effects will be important and sometimes crucial.  We refer to the pedagogical introduction \cite{nond} on non-dissipative aspects of nonequilibrium statistical mechanics.  For the moment it suffices to add that transition rates in a process also have time-symmetric parameters and, quite obviously, we need to understand how they contribute to deciding the direction of the current.

\tableofcontents
\section{Traditional arguments}

\subsection{Phenomenology}\label{phen}
The medium inside and outside of a biological cell can be very different. These are connected via thin pores through which ions of various chemicals can be transported.  Consider such a pore or channel in the membrane separating outside and inside; see Fig.~\ref{memb}.  Because of different concentrations at its ends, there will be a current through the pore.  In fact, ions will be travelling from the region of higher chemical potential to the region of lower chemical potential.  The same thing happens with many types of currents, whether the channel is connected to thermal, chemical or mechanical reservoirs.
\begin{figure}[h]
\centering
\includegraphics[width=5 cm]{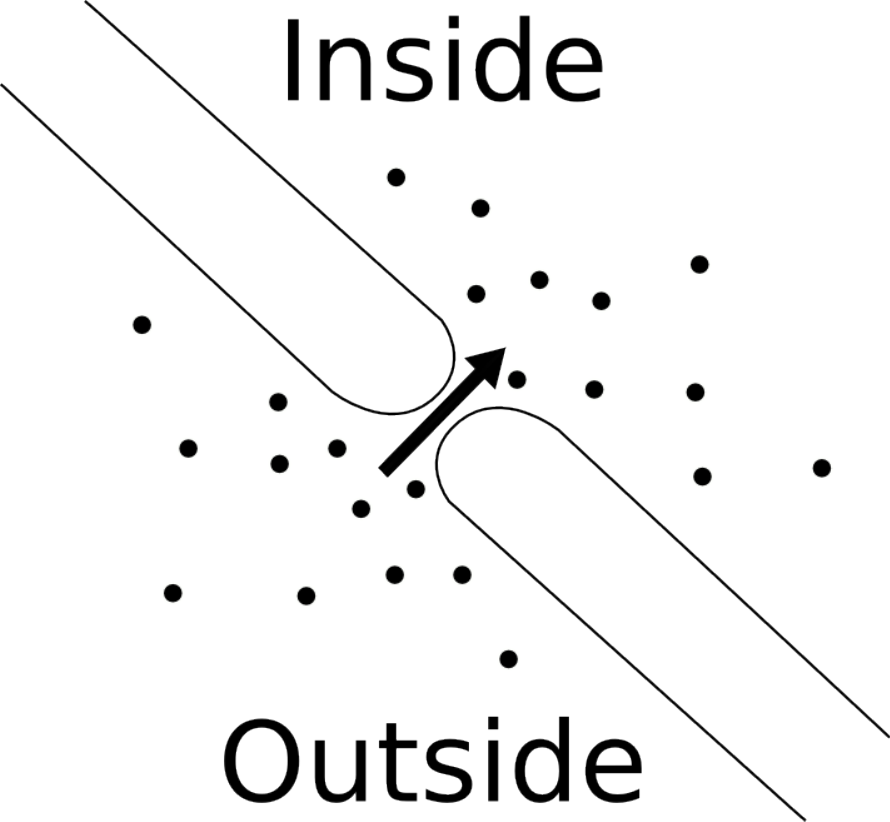}
\caption{Example of a simple stationary current for which the direction is decided by the positivity of the entropy production.}\label{memb}
\end{figure}
On the appropriate scale of time, the system is in steady nonequilibrium, not changing its macroscopic appearance.  There is a constant production $\Sigma$ of entropy in the environment, which is positive, \[
 \Sigma = -\beta\mu_1 \,J_1 - \beta \mu_2\,J_2
 \]
  with $J_i$ the particle flux into the $i-$th reservoir at chemical potential $\mu_i$ and inverse temperature $\beta$.  Stationarity (and bulk conservation of particles) implies that $J_1 + J_2 =0$ so that we can find the direction of the particle current $J_1$ by requiring
  \[
  \Sigma = \beta J_1(\mu_2 - \mu_1)\geq 0\quad \text{ (second law)}\]
By bulk conservation of particles we still have $J_1=J$ the stationary particle current through the channel or pore, from the second towards the first reservoir, and hence $J\geq 0$ whenever $\mu_2\geq \mu_1$.\\
Similar scenario's can be written for thermal and mechanical baths that frustrate the system.  Those are the typical cases where finding the direction of the current amounts to applying the second law in the form that the stationary entropy production be positive.\\
 
While the previous case was treated rather phenomenologically, precise mathematical arguments  can be provided for simple particle model systems following the same physics.  Here comes an example.

 \subsection{Stochastic lattice gas}
 We consider identical
 particles that can jump from site $i$ to nearest neighbor site $j=i\pm 1$ on the
 finite linear chain $\Lambda_N =
 \{-N,-N+1,\ldots,0,1,\ldots,N-1,N\}$; see FIg.~\ref{exx}.  The endpoints $i=\pm N$ in
 $\Lambda_N$ are called
 the boundary of the system; the other sites are in the bulk.
There is
 at most one particle per site $i$, so that a site $i$ can be vacant or
 occupied, and we write $\eta(i)\in \{0,1\}$ for
  the occupation at site $i\in \Lambda_N$. The state space is  $K = \{0,1\}^{\Lambda_N}$ with elements $\eta,\eta',\xi,\ldots \in K$. The reasoning below is outlined in \cite{prag}.\\
 
 The energy
 function on $K$ is chosen as
 \begin{equation}\label{hamilt}%
 H(\eta)=-B\sum_{i=-N}^{N} \eta(i) - \kappa \sum_{i=-N}
 ^{N-1} \eta(i)\,\eta(i+1),%
 \end{equation}%
 where $B$ and $\kappa$ are some real constants. 
 The system is imagined in thermal contact with a very
 large heat bath at inverse temperature $\beta$ (Boltzmann's
 constant is set equal to one).
The energy change in that bath over the transition $\eta\rightarrow \eta'$  gives a first contribution $\beta (H(\eta) -
  H(\eta'))$ to the change of
 entropy in the reservoir.
  Another important quantity here is the particle number,
 \[
 {\cal N}_{[j,k]}(\eta) =\sum_{i=j}^k\eta(i)
 \]
in the lattice interval
 $[j,k]\cap\Lambda_N$, $-N\leq j\leq k\leq N$. The total number of particles
 inside the system is $\cal N = {\cal N}_{[-N,N]}$.\\
 We imagine now also that the system is in contact with a
 particle reservoir at each of its boundary sites. There can be a
 birth or a death of a particle at these sites, which amounts to the entrance from and the exit to the corresponding reservoir of a particle.
 In that sense we write $J_\ell =
   \Delta {\cal N}_\ell$, $J_r = \Delta {\cal N}_r$ as the changes
   in particle number in the {\it left}, respectively {\it right}
   particle reservoir. The flow of particles in and out of the system
 can also contribute to the dissipated heat in the reservoir, and
 hence to changes in entropy:
 \begin{equation}\label{sent}
  S(\eta,\eta') = \beta\Delta E(\eta,\eta') - \beta\mu_\ell \Delta {\cal
  N}_\ell(\eta,\eta') - \beta\mu_r \Delta {\cal N}_r(\eta,\eta')
  \end{equation}
  is
   the change of entropy in the environment
 for $\mu_\ell$, respectively $\mu_r$ the
  chemical potentials (up to some factor $\beta$ that we have
  ignored) of the particle reservoirs left and right.  We will make mathematical sense of
 \eqref{sent} in terms of variables inside the system.
 \begin{figure}[h]
 \centering
 \includegraphics[width=10 cm]{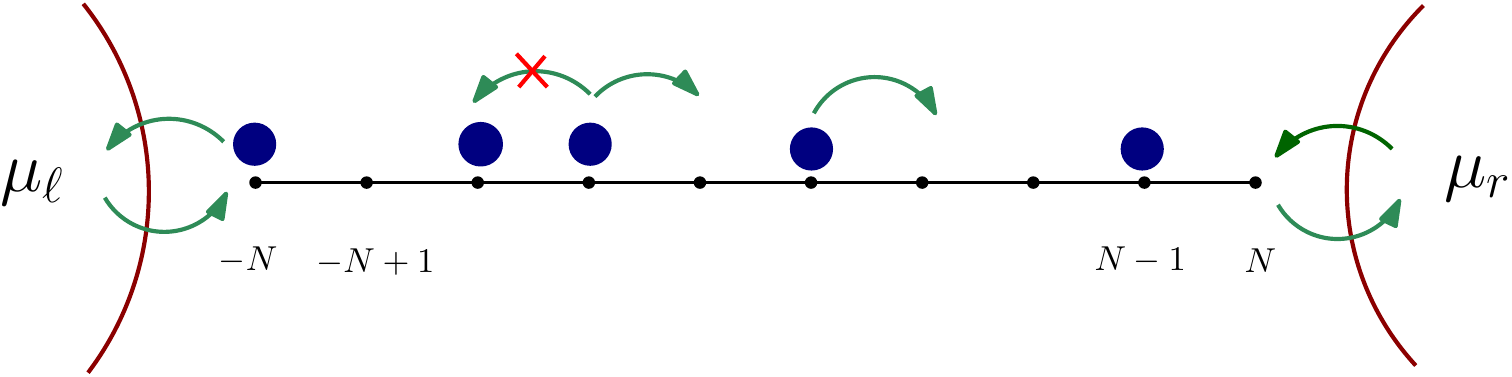}
 \caption{Stochastic lattice gas, symmetric in the bulk and governed by local interactions, driven by contact with particle reservoirs at different chemical potentials.}\label{exx}
 \end{figure}

 For the dynamics we choose a continuous time Markov process on $K$. Write the
 transformation
\[ 
 \eta ^{i,j} (k) =
 \begin{cases}
 \eta (k) & \text{if } k\neq i,\, k\neq j;
 \\
 \eta (i) & \text{if } k= j;
 \\
 \eta (j) & \text{if } k= i
 \end{cases}
\] 
for the state obtained from $\eta$ after exchanging the occupation of the sites $i,j$, only allowed for $j = i \pm 1$. The rate
 for that transition is taken to be
 \begin{equation}\label{cij}%
 C(i,j,\eta)=\exp \Bigl[ -\frac{\beta}{2} (H(\eta^{i,j})
 -H(\eta))\Bigr]\,,\quad |i-j|=1%
 \end{equation}%
Similarly, the rate of birth and death for the transition
 $\eta \rightarrow
 \eta ^{i}$ with
 \[ 
  \eta^{i}(k) =
  \begin{cases}
  1- \eta(k) & \text{if } k=i
  \\
  \eta(k) & \text{if } k\neq i
  \end{cases}
 \] 
 only occurring at sites $i=-N,N$, is 
 \begin{equation}\label{ci}%
 C(i,\eta)=e^{-a_{i}\eta(i)}\exp\Bigl[ -\frac{\beta}{2}
 (H(\eta^{i}) - H(\eta))\Bigr]
 \end{equation}%
The relevant parameters are the values $a_{-N}=\beta\mu_\ell, a_{N} =\beta\mu_r$ representing the (different) chemical potentials of the
 two reservoirs at the outer edges.\\

 One observes from the definitions (\ref{cij}) that:
 \begin{equation}\label{detail1}
     \frac{C(i,j,\eta)}{C(i,j,\eta^{i,j})}=
     \frac{\exp \left [ -\beta H\left (
 \eta^{i,j}\right ) \right ]}{\exp \left [ -\beta H\left (
 \eta\right ) \right ]}
 \end{equation}  Furthermore, from (\ref{ci}) we have
 \begin{equation}\label{detail2start}%
 \frac{C(i,\eta)}{C(i,\eta^{i})}= \frac{\exp [-a_{i} \eta(i)]}{\exp
 [-a_{i} (1-\eta(i))]}\frac{\exp \left [ -\beta H\left (
 \eta^{i}\right ) \right ]}{\exp \left [ -\beta H\left ( \eta
 \right ) \right ]}\,,\quad i=\pm N%
 \end{equation}%
 For
 $a_{-N}=a_{N}=a$,
  when the particle reservoirs left and right have
 equal concentration, then the system dynamics satisfies the
 condition of detailed balance:   for all allowed transitions $\eta \rightarrow \eta'$ and
  corresponding transition rates $W(\eta\rightarrow \eta')$,
 \begin{equation}\label{detailbalance}%
 \frac{W(\eta\rightarrow \eta')}{W(\eta' \rightarrow
 \eta)}=\frac{\mathbb{P}^{\beta,a} [\eta']}{\mathbb{P}^{\beta,a}
 [\eta]}
 \end{equation}%
for the grand-canonical equilibrium probabilities
  \begin{equation} \label{gibbs}%
  \mathbb{P}^{\beta,a} [\eta]=\frac 1{{\cal Z}}\,e^{a\,\sum\eta(i)}\,e^{-\beta H(\eta)}%
  \end{equation}%
  where ${\cal Z}={\cal Z}(a,\beta,N)$ is a normalization factor.
 Thus, 
  \eqref{gibbs} is a reversible stationary measure when  $a_{-N}=a_{N}=a$.\\
  
   We now consider $a_1\neq a_N$ (different chemical potentials). At the left boundary of the system, see \eqref{detail2start},
 \begin{equation}%
 \label{aneqa1}%
 \frac{C(-N,\eta)}{C(-N,\eta^{-N})}=e^{-\beta
 [H(\eta^{-N})-H(\eta)]-a_{-N} J_{\ell} (\eta, \eta^{-N})}%
 \end{equation}%
 where $J_{\ell}(\eta, \eta^{-N})=1$ when the
  particle leaves the system via the site $-N$, i.e., $\eta(-N)=1$,
 and $J_{\ell}(\eta, \eta^{-N})=-1$ when a new particle enters,
 i.e., $\eta(-N)=0$.  Analogously, the current $J_{r}(\eta, \eta') = 1$ when
 $\eta(N)=1$, $\eta'=\eta^N$ and $J_{\ell}(\eta, \eta^{'})=-1$
 when $\eta(N)=0$, $\eta'=\eta^N$.  The currents are zero otherwise.\\
As a consequence,
 \begin{equation}\label{ldb}%
 \frac{W(\eta\rightarrow \eta')}{W(\eta' \rightarrow
 \eta))}=e^{-\beta[H(\eta')-H(\eta)]-a_{-N} J_{\ell}(\eta,
 \eta') - a_N J_r(\eta, \eta')}%
 \end{equation}%
 where we see the change of entropy \eqref{sent}. In other words,
 \begin{equation}\label{set}
 \frac{W(\eta\rightarrow \eta')}{W(\eta' \rightarrow \eta)}=
 e^{S(\eta,\eta')}
 \end{equation}
(which is known as the condition of {\it local} detailed balance),  and
 \begin{equation} \label{conserlow}%
 J_\ell (\eta, \eta')+J_r (\eta,\eta')={\cal N}(\eta)-{\cal N}(\eta')%
 \end{equation}%
 or,
with $a_N = a$, $a_{-N} = a + \delta$,
 \[
 \frac{W(\eta\rightarrow \eta')}{W(\eta' \rightarrow
 \eta)}=\frac{\mathbb{P}^{\beta,a} [\eta']}{\mathbb{P}^{\beta,a}
 [\eta]}\, e^{-\delta J_\ell(\eta,\eta')}
 \]
with $\delta$ thus measuring the amount of breaking of detailed balance.

 As above we define the bulk currents $J_i(\eta,\eta')$ to be $+1$
 if in the transition $\eta\rightarrow \eta'$ a particle moves over
 the bond $i\rightarrow i+1$, and equal to $-1$ if a particle moves
 $i \leftarrow i+1$.  In fact and throughout we confuse current
 with what is more like a time-integrated current, or a change of
 particle number.\\
 
  We have piecewise-constant paths $\omega$ over the
  time-interval $[0,\tau]$, starting from some initial configuration
  $\eta_0$ after which it changes into $\eta_{t_1},
  \eta_{t_2},\ldots$ at random times $t_1, t_2,\ldots$.  At the jump times we take
  $\eta_{t_{k-1}} = \eta_{t_k^-}$ and $\eta_{t_k} = \eta_{t_k^+}$ for having right-continuous paths with left limits.
 The time-reversal transformation on path-space  $\Theta$ is defined via $(\Theta\omega)_t  =
  \omega_{\tau-t}$, up to irrelevant modifications at the jump times
  making $\Theta\omega$ again right-continuous.\\
   We consider a path $\omega =
 (\eta_t)_{t=0}^\tau$ and currents $J_i(\omega)$, $i=-N,\ldots,N,$
 defined by \[ J_i(\omega) = J_i(\eta_0,\eta_{t_1}) +
 J_i(\eta_{t_1},\eta_{t_2}) + \ldots +
 J_i(\eta_{t_{n-1}},\eta_{\tau})
 \]
 In particular, $J_r = J_N$ and for $i\leq k$,
 \begin{align}\label{bc}
 J_{i} (\omega) - J_{k}  (\omega) &= {\cal
 N}_{[i+1,k]}(\eta_\tau)- {\cal N}_{[i+1,k]}(\eta_0)\nonumber\\
 J_\ell(\omega) + J_{-N}(\omega) &= \eta_0(-N) - \eta_\tau(-N)
 \end{align}%
 Observe that the currents $J_i$ are extensive in the time
 $\tau$.\\
 
 All of that is related to the process, be it transient or be it
 steady.  We concentrate on the steady state regime.  It is easy to
 verify that we have here a unique stationary distribution $\rho$.
  That stationary distribution is  only implicitly known, solution of the
  (time-independent) Master equation.  Corresponding to $\rho$ there is then a
 stationary process with distribution ${\boldsymbol P}_\rho$.  If
 we look at expectations in the stationary process we write
 $\langle \,\cdot\, \rangle_\rho$.\\
 From the conservation laws \eqref{conserlow} and \eqref{bc} we have
 \[
 \langle J_\ell \rangle_\rho = -\langle J_r \rangle_\rho = -
 \langle J_i \rangle_\rho, \qquad i \in \Lambda_N
 \]
 
 \begin{proposition}
 The direction of the current is from higher to lower chemical potential, i.e.,
 assuming that $\delta \geq 0$ (or, $a_{-N}=\mu_\ell \geq a_N =\mu_r$) we have $\langle
 J_i\rangle_\rho \geq 0$. 
 \end{proposition}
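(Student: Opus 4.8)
The plan is to deduce the sign of the stationary current from the nonnegativity of the mean entropy production, realized as a relative entropy between the path measure and its time reversal, using the local detailed balance identity \eqref{set}. First I would fix a time $\tau>0$ and compare, on path space, the stationary law ${\boldsymbol P}_\rho$ of $\omega=(\eta_t)_{t=0}^\tau$ with its time reversal ${\boldsymbol P}_\rho\circ\Theta$. For a continuous-time jump process the holding times and the total escape rates enter symmetrically under $\Theta$ and cancel in the ratio of the two path densities, leaving
\begin{equation*}
\log\frac{{\boldsymbol P}_\rho(\omega)}{{\boldsymbol P}_\rho(\Theta\omega)}=\log\frac{\rho(\eta_0)}{\rho(\eta_\tau)}+\sum_{\text{jumps}}\log\frac{W(\eta\to\eta')}{W(\eta'\to\eta)},
\end{equation*}
the sum running over the successive transitions of $\omega$.

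Next I would rewrite the jump sum with local detailed balance. By \eqref{set} each term equals the entropy change $S(\eta,\eta')$; equivalently, using the factorised form $W(\eta\to\eta')/W(\eta'\to\eta)=(\mathbb{P}^{\beta,a}[\eta']/\mathbb{P}^{\beta,a}[\eta])\,e^{-\delta J_\ell(\eta,\eta')}$ displayed just after \eqref{conserlow}, the reference part telescopes along the path and the driving part accumulates the boundary current $J_\ell(\omega)$. This gives
\begin{equation*}
\log\frac{{\boldsymbol P}_\rho(\omega)}{{\boldsymbol P}_\rho(\Theta\omega)}=\log\frac{\rho(\eta_0)}{\rho(\eta_\tau)}+\log\frac{\mathbb{P}^{\beta,a}[\eta_\tau]}{\mathbb{P}^{\beta,a}[\eta_0]}-\delta\,J_\ell(\omega).
\end{equation*}

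Then I would take the expectation $\langle\cdot\rangle_\rho$. The left-hand side is the relative entropy of ${\boldsymbol P}_\rho$ with respect to its time reversal ${\boldsymbol P}_\rho\circ\Theta$, hence nonnegative. On the right-hand side both logarithmic terms are differences of a single-time observable evaluated at $t=\tau$ and at $t=0$; since $\rho$ is stationary, $\eta_0$ and $\eta_\tau$ share the marginal $\rho$, so both expectations vanish exactly. What survives is $0\le-\delta\,\langle J_\ell\rangle_\rho$, i.e. $\delta\,\langle J_\ell\rangle_\rho\le 0$. Invoking the conservation identity $\langle J_\ell\rangle_\rho=-\langle J_i\rangle_\rho$ (from \eqref{conserlow} and \eqref{bc}) turns this into $\delta\,\langle J_i\rangle_\rho\ge 0$, so that $\delta\ge 0$ forces $\langle J_i\rangle_\rho\ge 0$, which is the claim; for $\delta=0$ detailed balance \eqref{detailbalance} makes the current vanish.

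The main obstacle is the first step: making rigorous sense of the path-space Radon--Nikodym derivative and justifying that the time-symmetric contributions (holding-time distributions and escape rates) cancel under $\Theta$, so that only the product of jump rate-ratios remains. Once that is secured the rest is bookkeeping, since the telescoping is exact and the killing of the boundary terms uses nothing beyond stationarity of $\rho$. A minor point to handle carefully is the orientation and sign convention for $J_\ell$ (particle leaving versus entering at the site $-N$), to ensure that $\delta\ge 0$, meaning $\mu_\ell\ge\mu_r$, indeed corresponds to a nonnegative rightward bulk current $\langle J_i\rangle_\rho$.
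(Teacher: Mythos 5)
Your proposal is correct and follows essentially the same route as the paper: the Radon--Nikodym derivative of the stationary path measure with respect to its time reversal (the paper's \eqref{concp}, which you derive rather than quote), nonnegativity of the resulting relative entropy (the paper gets it from normalization plus concavity, i.e.\ \eqref{conca}), killing of the boundary terms by stationarity, and the conservation law $\langle J_\ell\rangle_\rho=-\langle J_i\rangle_\rho$ to conclude $\delta\langle J_i\rangle_\rho\geq 0$. Your explicit treatment of the degenerate case $\delta=0$ via detailed balance is a small addition the paper leaves implicit, but it does not change the argument.
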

\begin{proof}
The path density of ${\boldsymbol P}_\rho$ with respect to
 ${\boldsymbol P}_\rho\Theta$, both started in the stationary
 distribution $\rho$, is
 \begin{equation}\label{concp}
 \frac{\id{\boldsymbol P}_{\rho }}{\id{\boldsymbol P}_{\rho }\Theta
 }(\omega )=\frac{\rho (\omega _{0})}{\rho (\omega _{\tau})}\exp
 \left[ -\beta \left(
 H(\omega _{\tau})-H(\omega _{0})\right) +a\Delta {\cal N}-\delta J_{\ell}(\omega )\right],%
 \end{equation}%
By normalization we
 have:%
 \[
 \int \id{\boldsymbol P}_{\rho} (\omega) \frac{\id{\boldsymbol
 P}_{\rho}\Theta}
 {\id{\boldsymbol P}_{\rho}}(\omega)=1.%
 \]
 and hence, by concavity,
 \begin{equation}\label{conca}
 \int \id{\boldsymbol P}_{\rho}
 (\omega) \log \frac{\id{\boldsymbol P}_{\rho}\Theta} {\id{\boldsymbol P}_{\rho}}(\omega)\leq 0.%
 \end{equation}
 But, from \eqref{concp} and by stationarity
 \[ 
 0 \leq \int \id{\boldsymbol P}_{\rho}\,\log \frac{\id{\boldsymbol
 P}_{\rho}} {\id{\boldsymbol
 P}_{\rho}\Theta}(\omega)=-\delta\langle J_\ell\rangle_\rho =
 \delta\langle J_i\rangle_\rho
 \] 
  We conclude that
 \[ 
 \delta \langle J_i\rangle_\rho\geq 0%
\] 
 which shows that the average direction of the particle current
 depends only on the sign of $\delta$.\\
 Getting a strict inequality $\langle J_i\rangle_\rho >0$ is also
 possible for $\delta >0$; it suffices to see that there is a
 non-zero probability that the current $J_i$ as a function of the
 path $\omega$ is not constant equal to zero even when
 $\omega_0=\omega_\tau$.
 \end{proof}

Looking back at the proof we see that the main inequality has been the positivity \eqref{conca} of the relative entropy between the forward and the backward stationary process.  The latter coincides with the stationary entropy production, as is in fact visible from  \eqref{set}.  Hence, the proof above, as in \cite{prag}, is a non-perturbative statistical mechanical argument or the physical analogue for the phenomenology in Section \ref{phen}; nothing really new here.
 
\section{Problematic cases}

We collect a number of situations where the previous either phenomenological or statistical mechanical arguments, based on the positivity of the entropy production, do not work.
From a general perspective comparable to the so called Curie principle, currents may appear whenever they are not forbidden by some symmetry.  It is then not wholly surprising that we cannot always apply the same physical arguments.  Yet, the examples below are specifically relevant in the context of nonequilibrium physics, for which we may hope to develop some framework. 
 
\subsection{Ratchet currents}
 
 \subsubsection{Triangula}\label{tra}

In \cite{cvdb} a number of  versions of hard disc microscopic ratchets are introduced
and studied with molecular dynamics and with some low density expansions. A directed systematic motion appears when a temperature difference is applied to different units of a motor. One of the simple examples is called there the Triangula: it is a motor consisting of two identical triangular units, each sitting in a gas (reservoir) consisting of hard discs whose centres collide elastically with the triangles; see Fig.~\ref{tir}.  The two triangles  are rigidly connected along a rod, with their base parallel to it, and the whole motor is constrained to
move along the horizontal direction without rotation or vertical displacement.  When the temperatures in the two reservoirs are different, there appears a systematic motion which turns out te be in the direction of the triangles's apices --- to the right in Fig.~\ref{tir}.
\begin{figure}[h]
\centering
\includegraphics[width=8 cm]{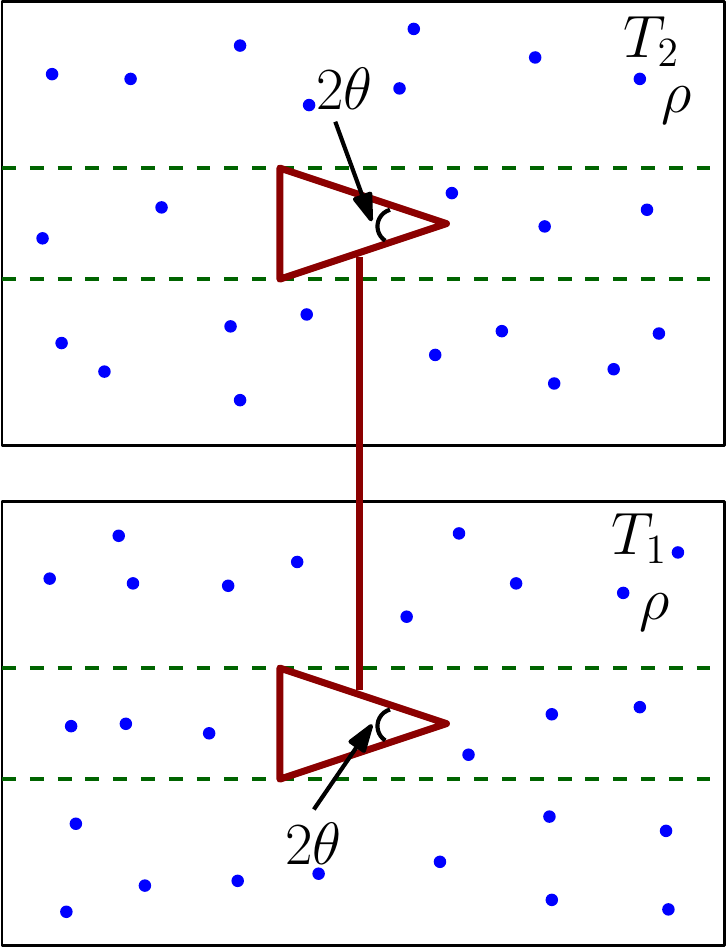}
\caption{Triangula: the triangles can only move horizontally and are connected.  They are in thermal baths at different temperatures.}\label{tir}
\end{figure}  
The speed $V$ of the Triangula depends on the difference in temperatures $T_1, T_2$, and on the apex angle $2\theta$, and to some good approximation for low density reservoirs is given by
\[
V = (1-\sin \theta)\frac{\sqrt{2\pi k_Bm}}{4M}\,(T_1-T_2)\frac{(\sqrt{T_1}-\sqrt{T_2})}{(\sqrt{T_1}+\sqrt{T_2})^2}
\]
for $m$ the mass of the gas particle and $M$ the mass of the triangle (equation 22 in \cite{cvdb} for equal densities).  If we write $T_2 = T_1 + \varepsilon, T_1=T$, that formula becomes in leading order
\[
V \simeq (1-\sin \theta)\frac{\sqrt{2\pi k_Bm}}{4M}\,\frac{\varepsilon
^2}{8T^{3/2}}
\]  
and we see that the speed or current is second order in the temperature difference. That is not attainable with linear response theory around equilibrium. The reason is that the translation current is orthogonal to the heat current (through the rod). Since we are thus in the regime of nonlinear response, that should already tell us that non-dissipative features play a role, cf. \cite{pccp}. As far as we know, nobody has a good heuristics or simple argument to explain that indeed $V>0$.

 \subsubsection{Parrondo game}
 \label{pars}
 
 The following is a paradoxical game invented by Juan Parrondo (1996); see \cite{paro} for more explanations and references.\\
 The state space is $K= \{1,2,3\}$ and the state at time $n$ is $x_n$.  The Markov chain uses a different rule ($A$ or $B$) at even and at odd times $n$.  Alternating, the following two games are played. Game $A$ is fair coin tossing: we simply move $x\rightarrow x\pm 1 \mod 3$ with equal probability at even times. Game
 $B$ is played at odd times and with two biased coins, a good one and a bad one. In
 game $B$, the good coin is tossed when $x_n \in \{1,2\}$ and the bad coin is used each time when $x_n=3$. Winning takes $x_{n+1} = x_n + 1$; losing at time $n$ means
 $x_{n+1} = x_n - 1$, always modulo 3. The transition
 probabilities are then
 \begin{eqnarray}
 \mbox{Prob}[x_{n+1}=x\pm 1|x_n=x] &=& 1/2,\quad \mbox{ when } n \mbox{ is even}
 \nonumber\\
  \mbox{Prob}[x_{n+1} = x + 1|x_n=x] &=& 3/4,\quad \mbox{ when
 } n \mbox{ is odd and } x \neq 3\nonumber\\
  \mbox{Prob}[x_{n+1} = x + 1|x_n=x] &=& 1/10,\quad \mbox{ when
 } n \mbox{ is odd and } x = 3
 \end{eqnarray}
Both games, when played separately at all times are reversible.  For example, for game $B$ (at all times), consider the cycle $3 \rightarrow 1 \rightarrow 2
 \rightarrow 3$ .  Its stationary probability (always for game $B$ alone) is Prob$[3
 \rightarrow 1 \rightarrow 2 \rightarrow 3] = \rho(3)\times 1/10
 \times 3/4 \times 3/4 = 9\rho(3)/160$.  For the reversed
 cycle, the probability Prob$[3 \rightarrow 2 \rightarrow 1
 \rightarrow 3] = \rho(3)\times 9/10 \times 1/4 \times 1/4=
 9\rho(3)/160$ is the same.  The equilibrium distribution for game $B$ is then found to be
 $\rho(1) = 2/13, \rho(2)=6/13$ and $\rho(3) = 5/13$. Obviously then,
 there is no current  when playing game $B$ and clearly,
 the same is trivially verified for game $A$ when tossing with the
 fair coin.  Yet, and here is the paradox, when playing periodically game $B$ after game $A$, a current
 arises.\\
 As in the previous case of the Triangula the very fact that a current arises is not so strange again, but the question is what really decides its direction.  We will show how to solve that question for a continuous time version at low temperature in Section \ref{lT}.

 \subsection{Multiple cycles}\label{mults}
 
 It is not uncommon in nonequilibrium to have multiple cycles in state space along which the dynamics can proceed.  Depending on the cycle  a particular current would go one way or the other, and yet both directions show exactly the same entropy production.  We have illustrated that via an example that models Myosin motion in \cite{physA}.  Here we reduce it to the essential mathematics for a random walker.  Entropy production decides the orientation of a rotational current, but not that of the induced translational current.\\
 \begin{figure}[h]
  \centering
  \includegraphics[width=13 cm]{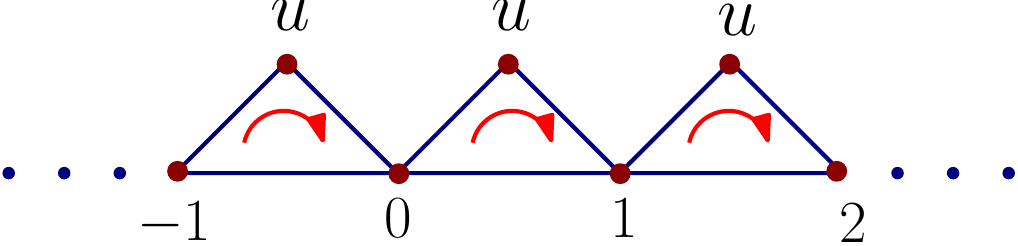}
  \caption{Necklace of three-state cycles with rotational current inducing a horizontal current.}\label{3x}
  \end{figure}
  
  The simplest case is a sequence of triangles where each triangle represents a three state Markov process.  We denote the states by $\{0,u,1\}$, and the transition rates are
 \begin{equation}\label{k3}
 k(1,0) =\varphi\, e^{\varepsilon/2}, \quad k(0,u) = e^{\varepsilon/2}, \quad k(u,1) = e^{\varepsilon/4}
 \end{equation}
 \[
  k(0,1) =\varphi \,e^{-\varepsilon/2}, \quad k(u,0) = 1, \quad k(1,u) = e^{-\varepsilon/4}
  \] 
  for parameters $\varphi,\varepsilon > 0$, see Fig.~\ref{3x}.  The $\varepsilon$, which decides the direction of the rotational current and which is responsible for the breaking of detailed balance, stands for an entropy flux (per $k_B$).  The trajectory $0\rightarrow u\rightarrow 1$ (taking the walker one step to the right) expends an entropy flux $\varepsilon$ (e.g. in the sense of \cite{jmp2000}), as seen from the calculation
  \[
\frac{k(0,1/2) k(1/2,1)}{k(1/2,0)k(1,1/2)} = 
 \frac{e^{\varepsilon/2}e^{\varepsilon/4}}{1\cdot e^{-\varepsilon/4}} = e^{\varepsilon}
\]
but exactly so does the step $0\rightarrow -1$, taking the walker one step to the left:
\[
\frac{k(1,0)}{k(0,1)} = e^{\varepsilon}
\]
There are two ``channels'' to move to the right $0\rightarrow u\rightarrow 1$ and $0\rightarrow 1$, and two ``channels'' 
$0\rightarrow u\rightarrow -1$ and $0\rightarrow -1$ to move to the left. Going right, the system prefers the ``channel''  $0\rightarrow u\rightarrow 1$, and for going left the system prefers the channel $0\rightarrow -1$. In all, there is no entropic preference to go right or left. In other words, the effective bias is also decided by the parameter $\varphi$.  The physical ``translational'' current towards the right is
\[
J = \rho(0)\, [e^{\varepsilon/2} + \varphi\, e^{-\varepsilon/2}]  -  [ \rho(u) + \rho(0) \,\varphi\, e^{\varepsilon/2} ]
\]

where the stationary occupations satisfy $\rho(0) + \rho(u)=1$ and 
\[
\rho(0) \,[e^{\varepsilon/2} + e^{-\varepsilon/4}] = \rho(u)[1 + e^{\varepsilon/4}]
\]
\begin{figure}[h]
  \centering
  \includegraphics[width=14 cm]{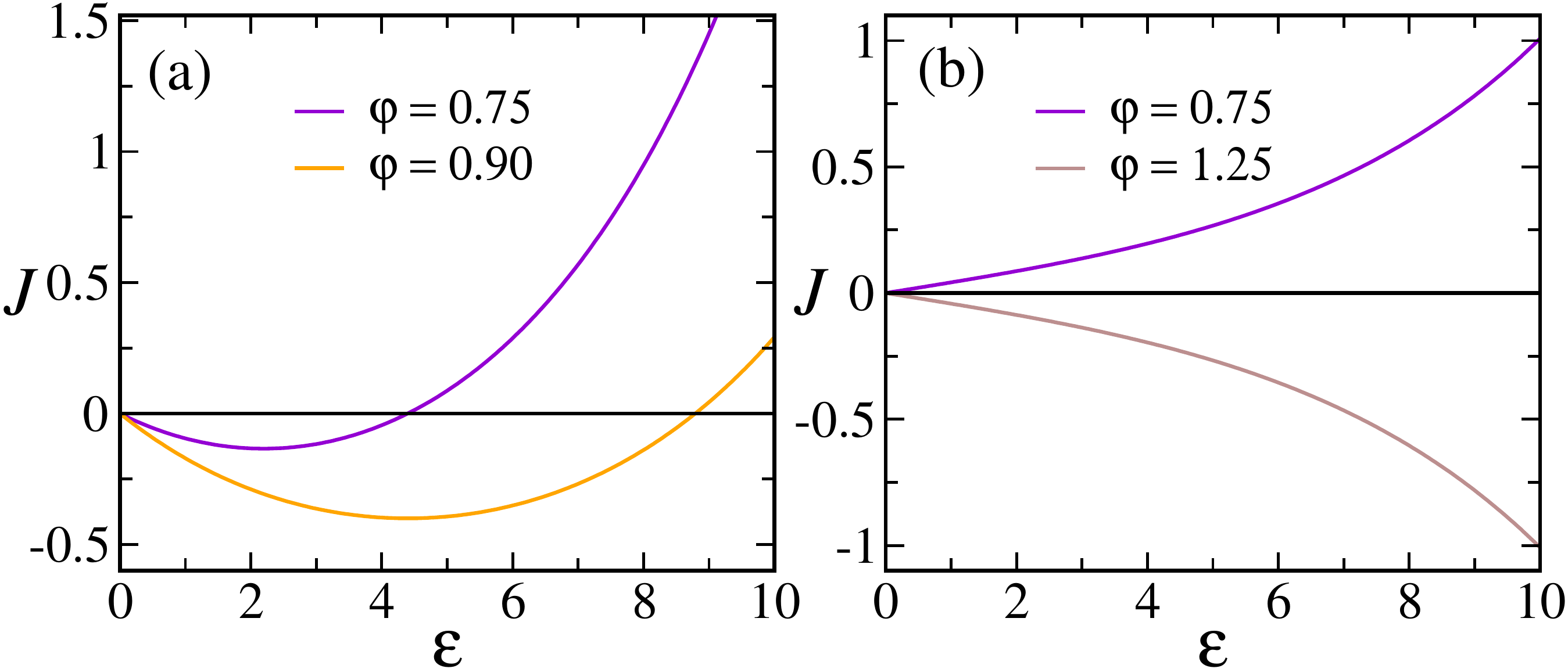}
  \caption{The horizontal current towards the right as a function of $\varepsilon$ corresponding to (a) Fig.~\ref{3x} for  $\varphi=0.65$ (upper) and  $\varphi=0.8$ (lower curve), and (b) Fig.~\ref{4x} for  $\varphi=0.75$ (upper) and  $\varphi=1.25$ (lower curve).}\label{3stphyscur}
  \end{figure}
 
The current $J$ is plotted in Fig.~\ref{3stphyscur}(a), as a function of $\varepsilon$ for two different choices of $\varphi$.  Fixing say $\varepsilon=6$ we see a positive current for $\varphi=0.75$ and a negative current for $\varphi=0.90$.  In other words, the direction of the current is not simply decided. The current diverges like $(1-\varphi)\exp[\varepsilon/4]$ for $\varphi\neq 1$ as $\varepsilon\uparrow \infty$.
  If $\varphi$ is large the current is to the left, and if $\varphi$ is small, the current gets positive.
  For $\varphi> 1/2$ there is a sign-reversal in the current as function of $\varepsilon$.\\

We can add more symmetry in the construction by considering, minimally, a four-state Markov process as elementary unit.
 \begin{figure}[h]
  \centering
  \includegraphics[width=12 cm]{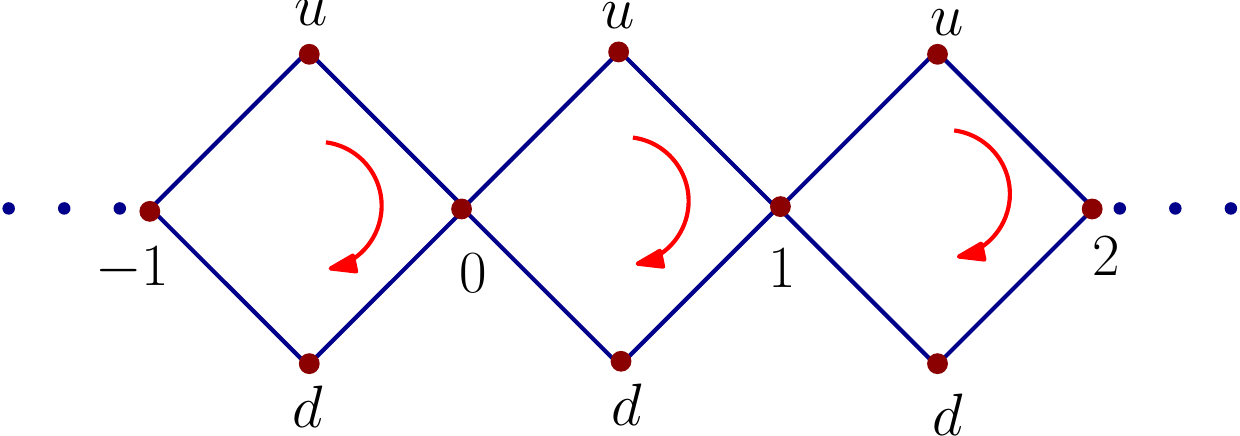}
  \caption{4-state necklace with top versus bottom symmetry for $\varphi=1$.}\label{4x}
  \end{figure}
 Then we look at Fig.~\ref{4x}, denote the states by $\{0,u,1,d\}$ (where $u$ stands for ``up'' and $d$ stands for ``down''), and the transition rates are
\[
 k(1,d) =\varphi \,e^{\varepsilon/4}, \quad k(d,0) = \varphi\, e^{\varepsilon/4}, \quad k(0,u) = e^{\varepsilon/4},\quad k(u,1) = e^{\varepsilon/4}
 \]
 \[
  k(d,1) =\varphi \,e^{-\varepsilon/4}, \quad k(0,d) = \varphi\, e^{-\varepsilon/4}, \quad k(u,0) = e^{-\varepsilon/4},\quad k(1,u) = e^{-\varepsilon/4}
  \] 
 We have again two elementary paths in the opposite direction, which are now $R_1: 0\rightarrow u \rightarrow 1$ and 
$R_2: 0\rightarrow d \rightarrow -1$, for which the entropy fluxes are both equal to $\varepsilon$.  (Of course $R_2$ can be identified with the path $1\rightarrow d \rightarrow 0$.)  It will again be also the ``reactivity'' $\varphi>0$ that decides the direction of the current, see Fig.~\ref{3stphyscur}(b). Or, what starts out as a time-symmetric parameter, turns out to give rise to time-asymmetry.\\

The above scenario has natural realizations, e.g. in the motion of some molecular motors like Myosin V studied in \cite{physA}.  The $\varphi$ then corresponds to the activity of the leading head. Is it because it is lower than that of the trailing head, that the motor moves forward.  The relation between dynamical activity or ``happy feet'' (of Paulo Conte's song) and the direction of current has been anticipated in  La Divina Commedia - Inferno - Canto I, where 
\newpage
Dante writes,
\begin{quotation}
 E come quei che con lena affannata,\\
uscito fuor del pelago a la riva,\\
si volge a l`acqua perigliosa e guata,\\

cos\`{\i} l'animo mio, ch'ancor fuggiva,\\
si volse a retro a rimirar lo passo\\
che non lasci\`o gi\`a mai persona viva.\\

Poi ch'\`ei posato un poco il corpo lasso,\\
ripresi via per la piaggia diserta,\\
s\`{\i} che 'l pi\`e fermo sempre era 'l pi\`u basso.
\end{quotation}
\vspace*{0.5 cm}
In the cartoon Fig.~\ref{4x} one should imagine the top corresponding to lifting the trailing foot and the bottom to lifting the leading foot; $\varphi>1$ corresponding to a more ``active'' trailing foot which easily moves you forward.
As, in the case of Dante leaving the valley and climbing the mountain, the firm or more stable foot was always the lower ($\varphi<1$), it becomes very difficult to go forward, and not to retreat.

 \subsection{Nonequilibrium internal degrees of freedom}\label{stw}
 We are used to think of internal degrees of freedom as an equilibrium reservoir.  When a ball bounces off the ground, it slightly deforms and warms up, indicating a restitution coefficient which is less than one, \cite{hal};  the entropy gets dissipated in these elastic degrees of freedom.  But what if the ball is ``alive'' or ``active,'' or, to put it in less suggestive language, what if the internal degrees of freedom are in steady nonequilibrium.  Can that not produce extra interesting effects?  The problematic case of the Triangula in Section \ref{tra} can be seen as an example.  The two triangles connected by the vertical rod shown in Fig.~\ref{tir} make one extended object which internally is subject to heat conduction (vertical energy current from higher to lower temperature).\\

 \begin{figure}[thb]
    \centering
    \includegraphics[width=8cm]{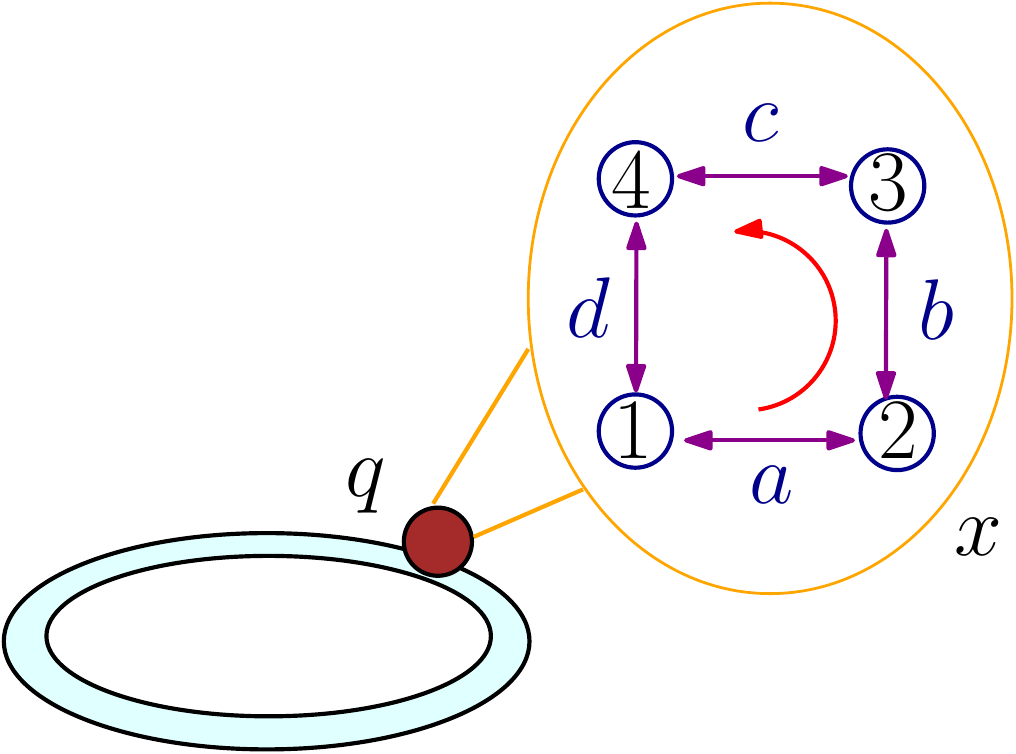}
    \caption{Walker (probe or colloid) on a ring with position $q$ with rotating stomach $x\in \{1,2,3,4\}$. The joint dynamics is specified in \eqref{wal}--\eqref{stom}. The colloid's position is the slow degree of freedom.}
    \label{rond}
  \end{figure}
  
Look now at Fig.~\ref{rond} for greater simplification.  Our object has position $q\in S^1$ on the ring suspended in a thermal bath at inverse temperature $\beta$.  For its dynamics we suppose the overdamped Langevin dynamics
 \begin{equation}\label{wal}
 \gamma\dot{q} = -\frac{\partial}{\partial q} E(x,q) + \sqrt{2\gamma \over \beta}~ \xi_t
 \end{equation}
 in the usual physics notation with $\xi_t$ standard white noise, $\gamma$ the damping coefficient and   
$E(x,q)$ being some interaction energy with an `internal' degree of freedom, here a four state Markov process with $x=1,2,3,4$.  We assume that the $x$ relaxes fast to stationarity compared to the walker where the time-scale is set by $\gamma$, and we take transition rates
 \begin{equation}\label{stom}
 k^q(x,x')= e^{-\frac{\beta}2 [E(x',q)-E(x,q)]} \,\varphi(x,x')\, e^{\frac{1}{2} s(x,x')}
 \end{equation}
The driving or nonequilibrium sits in $s(1,2)= s(2,3)= s(3,4) = s(4,1) =  \beta \varepsilon$ and the symmetric $\varphi(x,x')$  are 
 $\varphi(1,2) = a, \varphi(2, 3)=b, \varphi(3,4)=c, \varphi(4,1) = d$; see the ``stomach'' in Fig. \ref{rond}. Under the hypothesis of infinite time-scale separation the colloid is subject to the mean force
 \begin{equation}\label{fq}
 f(q) = -\sum_x \rho_q(x)\,\frac{\partial}{\partial q}E(x,q) 
 \end{equation}
 which can be calculated exactly from the stationary distribution $\rho_q(x)$ 
 for the internal degree of freedom $x$. When the rotational part of the force
 $f_\text{rot} = \oint f(q)\,\id q$ is non-zero, then the colloid will start moving around the circle. In fact, the steady current $J$, as plotted in Fig.~\ref{4sto} is essentially just given by it. 
 Obviously, there are many parameters, the form of the potential $E(x,q)$ but also the coefficients $a, b, c$ and $d$.  We ask here what determines the sign of that rotational force, which of course determines the direction of the current of the walker.
 
 \begin{figure}[thb]
    \centering
   \includegraphics[width=14cm]{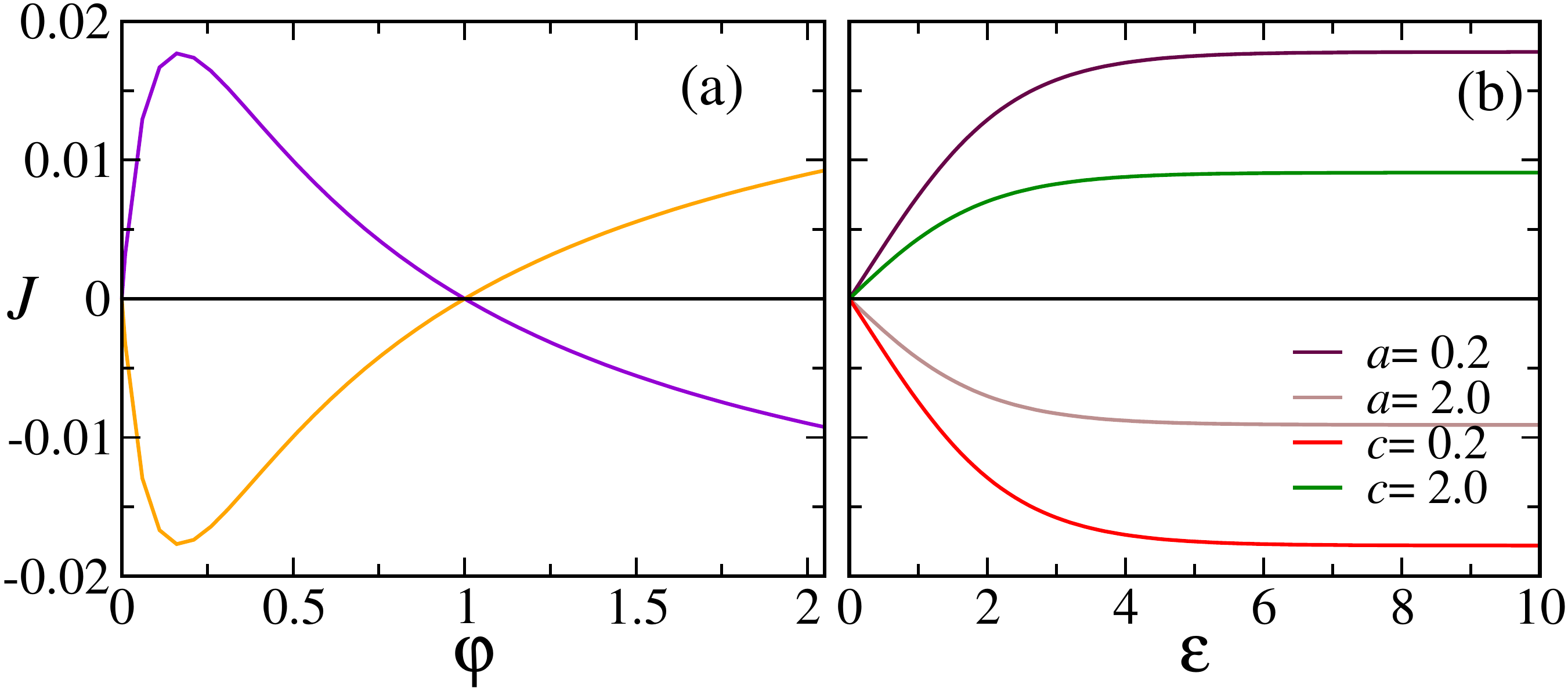}
 \caption{The rotational current $J = f_\text{rot} = \oint f(q)\,\id q$ of  the colloid as it depends on the reactivities ($\varphi$) and the driving $\varepsilon$ of the internal nonequilibrium in Fig.~\ref{rond}, see \eqref{stom}. (a) The direction of the current can change as function of the $a,b,c,d$; we see the current at $\varepsilon=5, \beta=1$ as a function of $\varphi=a$  (in blue), and  as a function of $\varphi=c$ (in orange), while the other reactivities equal 1 when not specified.  (b) The current as function of the driving $\varepsilon$ for various choices of $\varphi(x,x')$. From low to high the curves correspond to $c=0.2, a=2.0, c=2.0$ and $a=0.2$ with again all other reactivities fixed to the value 1 when not specified.}
    \label{4sto}
  \end{figure}
  
We observe here that we can get a sign-reversal of the current by varying solely the kinetic factors $a,b,c,d$.  More specifically we consider the energy function $E(x,q)$ for $q\in [-1,1]$, $E(2,q)=E(4,q)=0$ and

\begin{eqnarray*}
E(1,q)  = \left \{ 
\begin{split}
\frac 47(1+q) \quad \text{for}\quad  q\leq 3/4,\\
4(1-q) \quad \text{for}\quad  q\geq 3/4
\end{split}
\right. \quad \text{and} \quad
E(3,q)= \left \{ 
\begin{split}
4(1+q) \quad \text{for}\quad  q\leq -3/4,\\
\frac 47 (1-q) \quad \text{for}\quad  q\geq -3/4 
\end{split}
\right.
\end{eqnarray*}

In Fig.~\ref{4sto} (a) we see the rotational current $J=f_\text{rot}$ as a function of $c$ for $a=b=d=1$ (first negative then positive) and as a function of $a$ for $b=c=d=1$ (first positive then negative), both at driving $\varepsilon =5$, and $\beta=1$. The same is represented in the right panel but now as a function of the driving $\varepsilon$.  We clearly get information about the time-symmetric part in the transition rates \eqref{stom} from coupling that process $x_t$ to the position $q_t$ in \eqref{wal} of the walker and measuring its induced current.

 \subsection{Wrong direction!}
 
The response to an external field can be negative.  It is then the case that by pushing harder the particle gets slower.  It could even happen, that by pushing in one direction the particle moves in the opposite direction making negative absolute conductivity.  In \cite{chrvb} one uses memory to achieve that result, but one gets it also from considering the Markov models of Section \ref{mults}.\\

 \begin{figure}[h!]
  \centering
 \includegraphics[width=8cm]{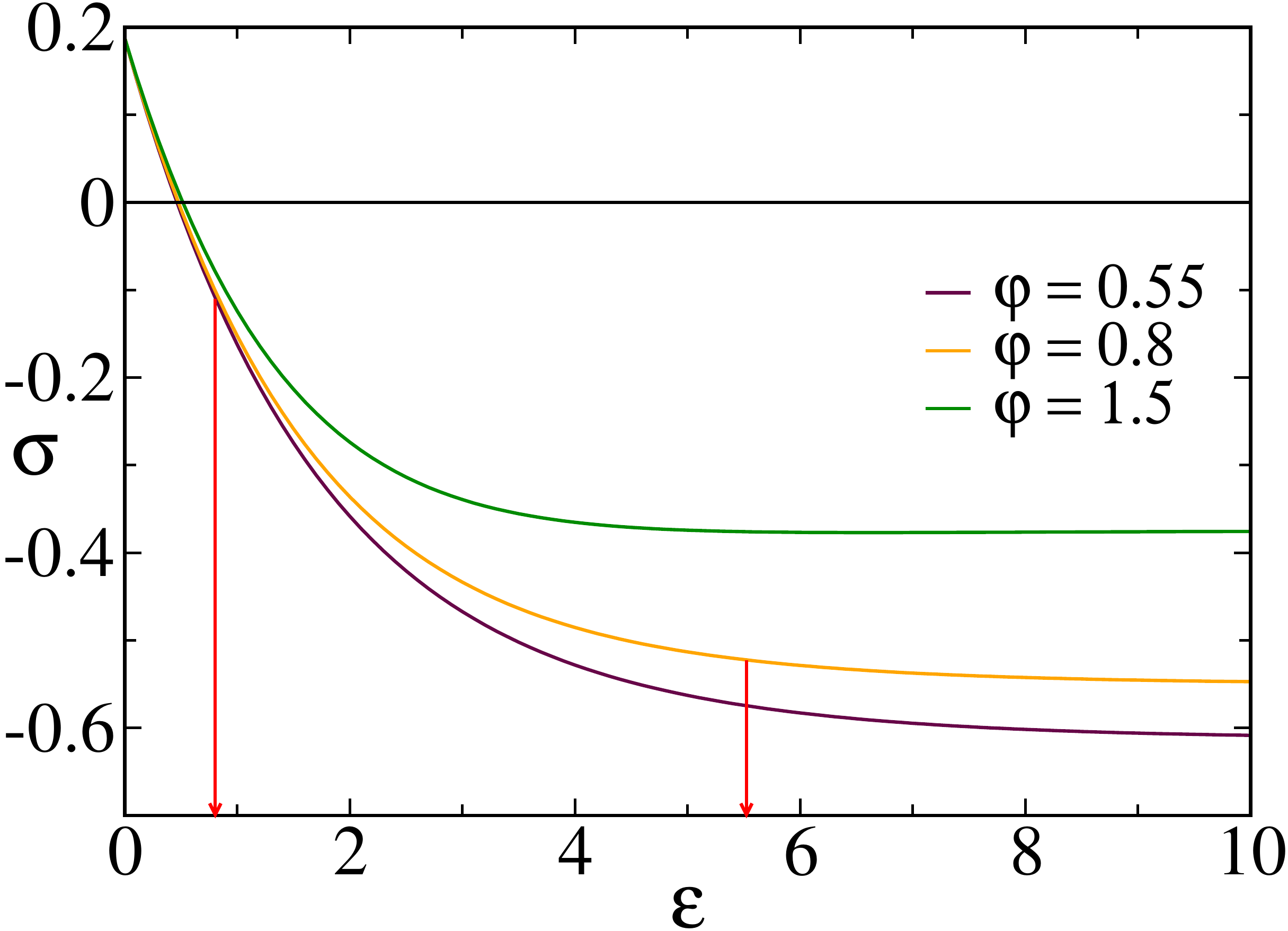}
  \caption{ The conductivity $\sigma$ \eqref{sce}, rescaled with a factor $\exp[-\varepsilon/4]$, as function of $\varepsilon$ for $\varphi=0.55$ (lowest curve) having its stalling point at $\varepsilon^* \simeq 0.80$, for $\varphi=0.8$ with stalling point at $\varepsilon^*\simeq 5.54$ and $\varphi=1.5$ (upper curve). There is negative conductivity $\sigma<0$, including at stalling points where the pushing makes the walker go back instead of forward.}
   \label{conc}
 \end{figure}
 
Consider again the set-up of Fig.~\ref{3x}, and the result \eqref{3stphyscur}(a) for the horizontal current in the positive direction.  There are possible stalling points $(\varepsilon^*,\varphi*)$ where that current vanishes.  Taking these values or, more generally, fixing arbitrary $(\varepsilon,\varphi)$ we perturb the rates \eqref{k3} in the following way:
\[
 k_E(1,0) =[\varphi+E]\, e^{\varepsilon/2}, \quad k_E(0,u) = e^{[\varepsilon+E]/2}, \quad k_E(u,1) = e^{[\varepsilon+E]/4}
 \]
 \[
  k_E(0,1) =[\varphi+E] \,e^{-\varepsilon/2}, \quad k_E(u,0) = 1, \quad k_E(1,u) = e^{-\varepsilon/4}
  \] 
pushing a bit harder with $E>0$ in the upper channel (only) and changing the time-symmetric coefficient $\varphi\rightarrow \varphi + E$ also.  We get a new value of the horizontal current $J_E$ and we can ask how it changes, that is to find the conductivity 
\begin{equation}\label{sce}
\sigma = \left.\frac{\id J_E}{\id E} \right|_{E=0}
\end{equation}

 We see in Fig.~\ref{conc} that $\sigma$ gets negative for large enough values of $\varepsilon$, including at stalling values.  Thus, there, the current actually goes backward while pushing forward. 
 
 \section{Low temperature analysis}\label{lT}
 
We consider here a continuous time version of the  Parrondo game of Section \ref{pars} with random flipping between a flat potential  and a nontrivial energy landscape.  It gives an approach to the problems of Sections \ref{pars}--\ref{mults} by considering low temperature asymptotics.
In particular we use the Freidlin-Wentzel theory of \cite{lowT} to obtain an expression for the low temperature ratchet current.  Its direction is not determined by entropic considerations (only) but involves the reactivities.\\

Look at Fig.\ref{grafsa}.  States of a continuous time Markov process are on two rings, each having $N>2$ of states,  denoted by $x= (i,n)$ where $i \in\{ 1=N+1,2,\ldots,N\}$ and $n=0,1$.

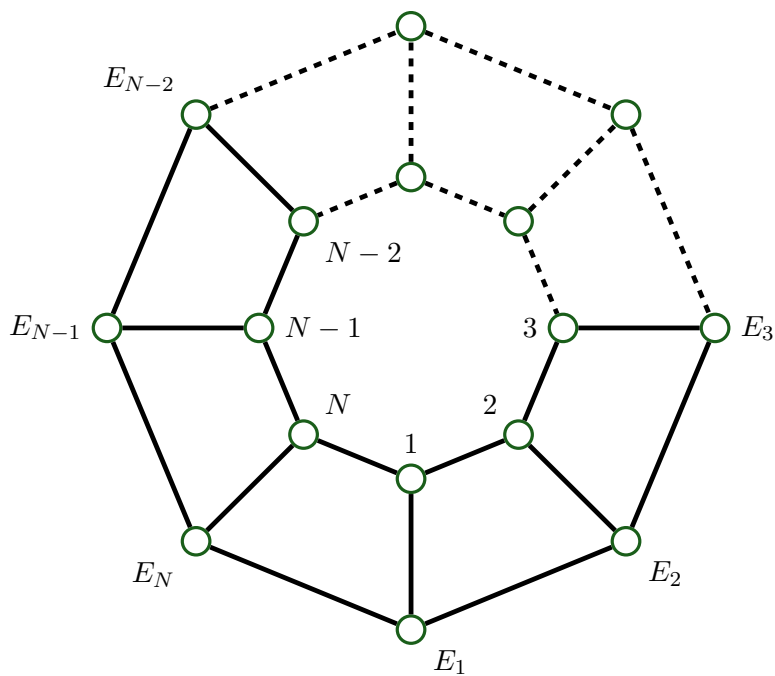
\begin{figure}[h]
\centerline{
\begin{tikzpicture}
\node [green!30!black!90,line width=1.2 pt,draw, shape= circle,label=90:$1$ ](x1) at (270:2)  {$ $};
\node [green!30!black!90,line width=1.2 pt,draw, shape= circle,label=135:$2$] (x2) at (315 :2) {$ $};
\node  [green!30!black!90,line width=1.2 pt,draw, shape= circle,label=180:$3$ ](x3) at (0:2) {$ $};
\node [green!30!black!90,line width=1.2 pt,draw, shape= circle ,label=45:$N$ ] (xN) at (225:2) {$ $};
\node  [green!30!black!90,line width=1.2 pt,draw, shape= circle,label=0:$N-1$ ](xN1) at (180:2) {$ $};
\node  [green!30!black!90,line width=1.2 pt,draw, shape= circle,label=315:$N-2$ ](xN2) at (135:2) {$ $};
\node  [green!30!black!90,line width=1.2 pt,draw, shape= circle ](xDOT1)  at (90:2) {};
\node [green!30!black!90,line width=1.2 pt,draw, shape= circle ] (xDOT2)  at (45:2) {};
\node [green!30!black!90,line width=1.2 pt,draw, shape= circle,label=320:$E_1$] (E1) at (270:4) {$ $};
\node  [green!30!black!90,line width=1.2 pt,draw, shape= circle,label=320:$E_2$ ](E2) at (315 :4) {$ $};
\node  [green!30!black!90,line width=1.2 pt,draw, shape= circle,label=0:$E_3$ ](E3) at (0:4) {$ $};
\node [green!30!black!90,line width=1.2 pt,draw, shape= circle,label=225:$E_N$ ]  (EN) at (225:4) {$ $};
\node [green!30!black!90,line width=1.2 pt,draw, shape= circle,label=180:$E_{N-1}$  ] (EN1) at (180:4) {$ $};
\node [green!30!black!90,line width=1.2 pt,draw, shape= circle,label=135:$E_{N-2}$  ] (EN2) at (135:4) {$ $};
\node  [green!30!black!90,line width=1.2 pt,draw, shape= circle ](DOT1)  at (90:4) {};
\node  [green!30!black!90,line width=1.2 pt,draw, shape= circle ](DOT2)  at (45:4) {};
\foreach \from/\to in {E1/E2, E2/E3, E1/EN, EN2/EN1}
\draw [ thick,line width=1.8 pt] (\from) -- (\to);
\draw[dashed, thick,line width=1.8 pt] (EN2)-- (DOT1);
\draw[dashed, thick,line width=1.8 pt] (DOT2)--(E3) ;
\draw[dashed, thick,line width=1.8 pt ] (DOT1)--(DOT2)  ;
\draw[dashed, thick,line width=1.8 pt] (xDOT1)--(DOT1)  ;
\draw[dashed, thick,line width=1.8 pt ] (xDOT2)--(DOT2)  ;
\draw[dashed, thick,line width=1.8 pt ] (xDOT1) -- (xN2);
\draw[dashed, thick,line width=1.8 pt ] (xDOT1) -- (xDOT2);
\draw[dashed, thick,line width=1.8 pt ] (xDOT2) -- (x3);
\draw[ thick,line width=1.8 pt] (x1) -- (E1);
\draw[ thick,line width=1.8 pt] (EN1) -- (EN);
\foreach \from/\to in {x2/E2,x3/E3,xN/EN,xN1/EN1,xN2/EN2}
\draw[thick,line width=1.8 pt] (\from) -- (\to);
\foreach \from/\to in {x1/x2,x2/x3,xN2/xN1,xN/xN1,x1/xN}
\draw[thick,line width=1.8 pt] (\from) -- (\to);
\end{tikzpicture}
}
\caption{Continuous time Parrondo game.}
\label{grafsa}\end{figure}

On the outer ring ($n=0$) energies  $E_1<\ldots < E_N$ are associated to the states and transition rates 
are thermal,
 \begin{equation}\label{rate1}
k((i,0),(i+1,0))=e^{\frac{\beta}{2} (E_i-E_{i+1})}, \quad k((i+1,0),(i,0))=e^{\frac{\beta}{2} (E_{i+1}-E_{i})}
\end{equation}
for inverse temperature $\beta$.
The inner ring ($n=1$) corresponds to a walker in a flat potential landscape so that
 \begin{equation}\label{rate2}
k((i,1),(i+1,1))=k((i+1,1),(i,1))=1
\end{equation}
The random flipping between the two potentials is realized by moves between the rings, at transition rates $k((i,n),(i,1-n))=a$ for some $a>0$. There is no explicit driving except that for $a=0$ there is detailed balance of course and for very strong coupling $a\gg 1$, the model is effectively running on a single ring.  In the limit $a\uparrow \infty$ there is again detailed balance with inverse temperature $\beta/2$.\\
 The question for the nonequilibrium situation is in what sense the walker will typically move, either clockwise of counter clockwise.  Again indeed, the direction of that current, we call it now the ratchet current, is not decided by the positivity of the entropy production.  Consider for example Fig.~\ref{twotraj} where two trajectories  $\omega_1 = ((N,0),(N-1,0),\ldots,(1,0), (1,1),(N,1),(N,0))$ and $\omega_2 = ((N,0),(1,0),(1,1),(2,1),\ldots,(N,1),(N,0))$ are depicted 
 \begin{figure}[h]\label{twotraj}
\centerline{
\begin{tikzpicture}
\node [green!30!black!90,line width=1.2 pt,draw, shape= circle,label=90:$1$ ](x1) at (270:2)  {$ $};
\node [green!30!black!90,line width=1.2 pt,draw, shape= circle,label=135:$2$] (x2) at (315 :2) {$ $};
\node  [green!30!black!90,line width=1.2 pt,draw, shape= circle,label=180:$3$ ](x3) at (0:2) {$ $};
\node [green!30!black!90,line width=1.2 pt,draw, shape= circle ,label=45:$N$ ] (xN) at (225:2) {$ $};
\node  [green!30!black!90,line width=1.2 pt,draw, shape= circle,label=0:$N-1$ ](xN1) at (180:2) {$ $};
\node  [green!30!black!90,line width=1.2 pt,draw, shape= circle,label=315:$N-2$ ](xN2) at (135:2) {$ $};
\node  [green!30!black!90,line width=1.2 pt,draw, shape= circle ](xDOT1)  at (90:2) {};
\node [green!30!black!90,line width=1.2 pt,draw, shape= circle ] (xDOT2)  at (45:2) {};
\node [green!30!black!90,line width=1.2 pt,draw, shape= circle,label=320:$E_1$] (E1) at (270:4) {$ $};
\node  [green!30!black!90,line width=1.2 pt,draw, shape= circle,label=320:$E_2$ ](E2) at (315 :4) {$ $};
\node  [green!30!black!90,line width=1.2 pt,draw, shape= circle,label=0:$E_3$ ](E3) at (0:4) {$ $};
\node [green!30!black!90,line width=1.2 pt,draw, shape=circle,label=225:$E_N$ ]  (EN) at (225:4) {$ $};
\node [green!30!black!90,line width=1.2 pt,draw, shape= circle,label=180:$E_{N-1}$  ] (EN1) at (180:4) {$ $};
\node [green!30!black!90,line width=1.2 pt,draw, shape= circle,label=135:$E_{N-2}$  ] (EN2) at (135:4) {$ $};
\node  [green!30!black!90,line width=1.2 pt,draw, shape= circle ](DOT1)  at (90:4) {};
\node  [green!30!black!90,line width=1.2 pt,draw, shape= circle ](DOT2)  at (45:4) {};
\foreach \from/\to in {E2/E1, E3/E2,EN1/EN2}	
\draw [blue,-> , thick, line width=1.8pt] (\from) -- (\to);
\draw[blue,dashed,-> , thick, line width=1.8pt] (EN2)-- (DOT1);
\draw[blue,dashed,-> , thick, line width=1.8pt] (DOT2)--(E3) ;
\draw[blue,dashed,->, thick, line width=1.8pt ] (DOT1)--(DOT2)  ;
\draw[dashed,-, thick, line width=1.8pt ] (xDOT1)--(DOT1)  ;
\draw[dashed,-, thick, line width=1.8pt ] (xDOT2)--(DOT2)  ;
\draw[red,dashed, thick, -> , line width=1.8pt] (xDOT1) -- (xN2);
\draw[red,dashed, thick, -> , line width=1.8pt] (xDOT2) -- (xDOT1);
\draw[red,dashed, thick, -> , line width=1.8pt] (x3) -- (xDOT2);
\draw[red, thick, -> , line width=1.8pt] (EN) -- (E1);
\draw[blue, thick, ->, line width=1.8pt ] (EN) -- (EN1);
\draw[thick,blue,-> , line width=1.8pt ] (x1) -- (xN);
\foreach \from/\to in {x2/E2,x3/E3,xN1/EN1,xN2/EN2}
\draw[-, thick, line width=1.8pt] (\from) -- (\to);
\foreach \from/\to in {x1/x2,x2/x3,xN2/xN1,xN1/xN}
\draw[->, thick,red, line width=1.8pt] (\from) -- (\to);
  \path[blue,thick,->, line width=1.8pt,every node/.style={font=\sffamily\small}]
(E1) edge [bend right] (x1);
  \path[red,thick,->, line width=1.8pt,every node/.style={font=\sffamily\small}]
(E1) edge [bend left] (x1);
  \path[blue,thick,->, line width=1.8pt,every node/.style={font=\sffamily\small}]
(xN) edge [bend right] (EN);
  \path[red,thick,->, line width=1.8pt,every node/.style={font=\sffamily\small}]
(xN) edge [bend left] (EN);
\end{tikzpicture}
}
\caption{Trajectories $\omega_1$ (clockwise, blue) and $\omega_2$ (counter-clockwise, red) with the same entropy flux, yet in opposite directions.}
\end{figure}
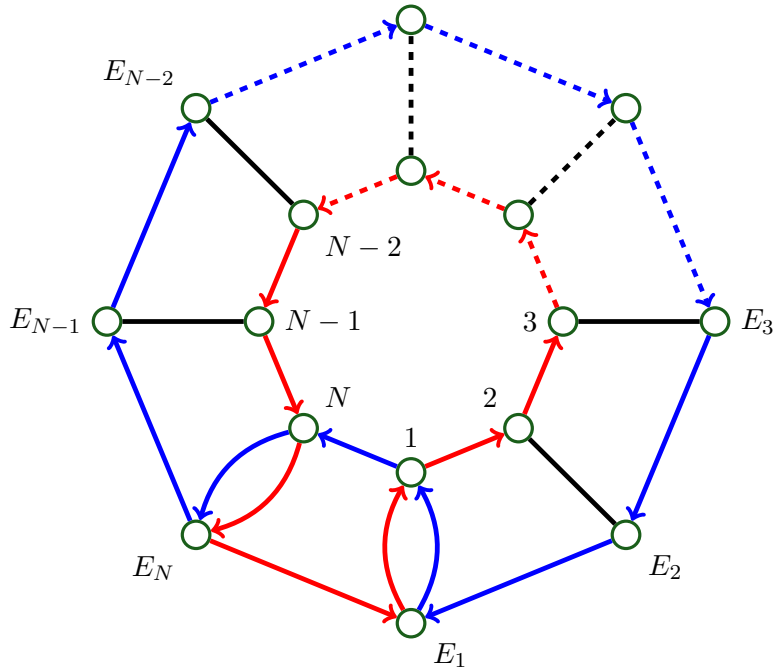
that wind in opposite directions; yet, their entropy fluxes are exactly identical, equal to $s(\omega_1)=s(\omega_2) = \beta(E_N - E_1) > 0$.\\

The stationary ratchet current $J_R$ in the clockwise direction is 
 \begin{equation}
J_{R}=j((i+1,0),(i,0))+ j((i+1,1),(i,1))
\end{equation}
with $j(x,y) = k(x,y)\rho(x) -k(y,x)\rho(y)$ where $\rho$ is the stationary probability law. It is  the current over both rings at the same time, and of course that current also depends on  the size $N$,  on the energies and on temperature.  We will look at the case $a=1$ but at low temperatures so that the transitions $(i,0)\rightarrow (i+1,0)$ are exponentially damped. The following combines proofs in \cite{lowT} and in \cite{louis} to show that the ratchet current is clockwise and saturates.
\begin{proposition}
\[
J_R = 0 \text{ for } N=3,\quad J_R > 0 \text{ for all } N>3 \text{ and } \lim_{N\uparrow \infty} J_R= \frac{1}{2} -\frac 1{\sqrt{5}}
\]
\end{proposition}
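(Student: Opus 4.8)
The plan is to pass to the $\beta\to\infty$ limit with Freidlin--Wentzell theory, collapse the $2N$-state chain to an effective $(N{+}1)$-state chain, solve its stationary law explicitly, and read off $J_R$ through a well-chosen cut.

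First I would analyse the fast relaxation on the outer ring. Since $E_1<\cdots<E_N$, the downhill rates $k((i{+}1,0),(i,0))=e^{\beta(E_{i+1}-E_i)/2}$ diverge while the uphill ones are exponentially damped, so on the fast scale a walker placed on the outer ring slides into the unique minimum $(1,0)$. The decisive Freidlin--Wentzell input is the \emph{direction} of each slide: from a column $j\in\{2,\dots,N-1\}$ the only descent is $j\to j-1\to\cdots\to1$ (clockwise, across bonds $1,\dots,j-1$), whereas from the top state $j=N$ both neighbours lie lower but the cliff $N\to1$ is the steeper descent, $e^{\beta(E_N-E_1)/2}\gg e^{\beta(E_N-E_{N-1})/2}$, so with probability tending to $1$ the walker relaxes directly $N\to1$ (counter-clockwise, across the single bond behind the maximum). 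Collapsing these instantaneous slides, the surviving states are $(1,0)$ together with the whole inner ring $(i,1)$, with effective moves $(1,0)\to(1,1)$ at rate $1$, the symmetric inner walk $(i,1)\to(i\pm1,1)$ at rate $1$, and $(i,1)\to(1,0)$ at rate $1$ (flip followed by slide).

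Writing $p_0=\rho(1,0)$ and $q_i=\rho(i,1)$, the balance at $(1,0)$ reads $p_0=\sum_i q_i$, so with normalisation $p_0=1/2$. The inner balance gives $q_{i-1}-3q_i+q_{i+1}=0$ for $i\neq1$ and $3q_1=q_2+q_N+p_0$ at the source; since the reduced inner dynamics (symmetric walk, uniform exit rate $1$, reinjection at state $1$) is invariant under reflection about state $1$, one has $q_i=q_{2-i}$ (indices read modulo $N$), in particular $q_N=q_2$ and $q_{N-1}=q_3$. The recursion has roots $\lambda_\pm=(3\pm\sqrt5)/2$ with $\lambda_+\lambda_-=1$, and the reflection-symmetric solution is $q_{1+k}=\alpha(\lambda_+^{\,k}+\lambda_+^{\,N-k})$, the source equation fixing $\alpha\bigl[(3-2\lambda_+)+\lambda_+^{\,N-1}(3\lambda_+-2)\bigr]=1/2$. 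To extract the current I would use that $J_R$ is the conserved clockwise flux through any vertical cut (the rungs are vertical and do not cross it). Choosing the cut between columns $N$ and $N-1$ --- the unique bond no slide traverses, since the descents use bonds $1,\dots,N-2$ and $N$ --- the outer contribution vanishes and $J_R$ equals the inner flux alone,
\[
J_R=q_N-q_{N-1}=q_2-q_3=\alpha\,(\lambda_+-1)\,\lambda_+\,(\lambda_+^{\,N-3}-1),
\]
which is $0$ at $N=3$ (states $2$ and $N{=}3$ are both nearest neighbours of $1$, forcing $q_2=q_3$) and strictly positive for $N>3$ since $\alpha>0$ and $\lambda_+>1$. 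As $N\to\infty$ the bracket is dominated by $\lambda_+^{\,N-1}(3\lambda_+-2)$, so $\alpha\sim\bigl[2\lambda_+^{\,N-1}(3\lambda_+-2)\bigr]^{-1}$ and
\[
J_R\longrightarrow\frac{\lambda_+-1}{2\lambda_+(3\lambda_+-2)}=\frac12-\frac1{\sqrt5},
\]
using $2\lambda_+(3\lambda_+-2)=15+7\sqrt5$.

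I expect the main obstacle to be the rigorous Freidlin--Wentzell justification of the reduction: establishing that the $O(1)$ current is carried entirely by the fast slides even though the outer non-minimal states have exponentially small occupation (the current on each outer bond is a product of an exponentially large rate and an exponentially small weight, which is $O(1)$ only along the dominant descent and vanishes along the subdominant one), and pinning down the slide orientation out of the top state $N$, on which both the sign and the saturation value of $J_R$ depend. Controlling the subleading uphill transitions and the order of limits ($\beta\to\infty$ taken before reading off the current) is the delicate point; everything downstream is the linear algebra of the second-order recursion.
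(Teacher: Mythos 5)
Your proposal is correct, and it reaches the result by a genuinely different route than the paper's proof. The paper applies the low-temperature machinery of \cite{lowT} directly: the weights of the dominant states ${\cal D}=\{(1,0),(i,1)\}$ are proportional to in-spanning-tree counts $|M(x)|$ of the most-likely-successor digraph, these are evaluated as determinants $\det L_x$ via the Matrix-Tree theorem, the current is read off at the cut between columns $1$ and $2$ --- where the \emph{outer} ring does contribute, so the paper also needs the prefactor asymptotics $j((2,0),(1,0))\simeq \det L_{(2,0)}/{\cal Z}$ for the non-dominant state $(2,0)$ --- and everything reduces to the recursion $\det B_N=3\det B_{N-1}-\det B_{N-2}$ plus a normalization ${\cal Z}$ imported from \cite{louis}. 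You instead contract the instantaneous downhill slides into an effective $(N{+}1)$-state chain and solve it outright. The two are in fact mathematically equivalent: in any in-spanning tree, each outer state $(i,0)$, $i\neq 1$, must use its unique outgoing edge, and contracting these forced edges turns the paper's digraph into exactly your effective chain, with your characteristic roots $\lambda_\pm=(3\pm\sqrt{5})/2$ being the same algebra as the paper's determinant recursion. What your route buys: the normalization is automatic ($p_0=\tfrac{1}{2}$, no need for the result of \cite{louis}), all occupations are explicit, and the cut behind the maximum means only dominant states enter the current --- you need merely the vanishing bound $\rho(N,0)\,k((N,0),(N-1,0))\sim q_N\,e^{-\beta(E_{N-1}-E_1)/2}\to 0$ rather than exact prefactors of non-dominant states. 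What the paper's route buys: it is an instance of a general theorem that works on graphs admitting no clean reduction, reflection symmetry, or privileged cut. I checked that your formulas reproduce the paper's values (e.g.\ for $N=4$ both give $J_R=1/30$), and your limit $\frac{\lambda_+-1}{2\lambda_+(3\lambda_+-2)}=\frac{1+\sqrt{5}}{30+14\sqrt{5}}=\frac{1}{2}-\frac{1}{\sqrt{5}}$ is correct. The one gap you flag --- a rigorous Freidlin--Wentzell justification of the reduction and of the order of limits --- is real, but it is precisely the input the paper itself takes from \cite{lowT} without proof, so your argument is no less complete than the original.
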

\begin{proof}
   Consider the set ${\cal D} := \{(1,0),(i,1),i=1,\ldots,N\}$ and let $M(x)$ be the number of in-spanning trees in the digraph obtained from Fig.\ref{grafsa} by keeping only the oriented bonds $(v,w)$ where $w$ is one of the most likely successor of $v$.  From \cite{lowT} we learn that at low temperatures, $\rho(x) \propto |M(x)|$ for $x\in {\cal D}$, and $\rho(y) \simeq |M(y)|\,e^{\beta \Gamma(y)}/{\cal Z}$, with some $\Gamma(y)<0$ for $y\notin \cal D$. By the Matrix-Tree Theorem; see e.g. \cite{Tutte}, we need the Laplacian matrix $L$ on the digraph $K^D$ and we erase the row and the  column corresponding to vertex $x$ to obtain the matrix $L_x$.  Then,
 \begin{equation}
|M(x)| = \det L_x
\end{equation}
The Laplacian of the digraph $K^D$ has a rather simple structure:
 \begin{eqnarray}\nonumber
L = \bordermatrix{ 
 ~      &  (1,0)       & (2,0) & \hdots  &\hdots&(N,0)&(1,1)&  (2,1) &\hdots	& \hdots&(N,1) \\	\nonumber
(1,0)   &   1          &       &         &      &     &-1   &        &    	  &       &       \\	\nonumber
(2,0)   &  -1  &  1    &        &       &       &    &       &		  &    	  &	   \\   \nonumber
\vdots&      & \ddots& \ddots &       &       &    &       &		  & 	  &    \\   \nonumber
\vdots&      &       & -1     &1      &       &    &       &		  & 	  &    \\   \nonumber
(N,0)   &   -1  &       &        &    0  & 1     &    &       &		  & 	  &    \\\hline \nonumber
(1,1)   &   -1 &       &        &       &       & 3  & -1    &		  & 	  &   -1 \\	\nonumber
(2,1)   &      & -1    &        &       &       & -1 & 3     & \ddots   &	      &    \\   \nonumber 
\vdots&      &       & \ddots &       &       &    & \ddots& \ddots   & \ddots&    \\   \nonumber 
\vdots&      &       &        & \ddots&       &    &       & \ddots   & \ddots& -1 \\   \nonumber 
(N,1)   &      &       &        &       &-1     & -1   &       &          & -1    & 3\\}    \nonumber
\end{eqnarray}
The state for which the number of in-trees becomes maximal is $(1,0)$: there are more combinations to form an in-tree to $(1,0)$ than to any other state $(i,1)$ on the inner ring.\\
To compute the ratchet current we take $x=(1,1)$ for which
 $\rho(1,1)\simeq\frac{1}{\mathcal{Z}}A((1,1))$.  Then, 
 \begin{equation}\nonumber
j((2,1),(1,1)) \simeq \frac{1}{\mathcal{Z}}\left(A((2,1))-A((1,1))\right)
\end{equation} 
Moreover,
 \begin{equation}\nonumber
j((2,0),(1,0)) \simeq \frac{A((2,0))}{\mathcal{Z}}
\end{equation}
As  a consequence,
 \begin{equation}\nonumber
J_{R}\simeq\frac{1}{\mathcal{Z}}\left( \det L_{(2,1)}+\det L_{(2,0)} - \det L_{(1,1)}\right)
\end{equation}
Furthermore, by inspecting the Laplacian $L$, one finds that
\begin{enumerate}
\item $\det L_{(2,0)}=2\det B_{N-1} -3\det B_{N-2}-3$,
\item $\det L_{(1,1)}=\det B_{N-1}$,
\item $\det L_{(2,1)}=\det B_{N-2}+1$
\end{enumerate}
with 
 \begin{equation}
B_N =\left( \begin{array}{cccc}
   3  & -1    &        &              \\	
  -1  & 3    & \ddots &         \\
      & \ddots& \ddots & -1         \\
      &       & -1 & 3        \\ 
 \end{array}\right)
\end{equation}
$B_N$ satisfies the recursion relation $\det B_N=3\det B_{N-1} - \det B_{N-2}$, where $\det B_2=8$ and $\det B_1=3$. Hence, by solving the recurrence we get
\[
\det B_N = \frac{5-3\sqrt{5}}{10}\,\large(\frac{3-\sqrt{5}}{2}\large)^N + \frac{5+3\sqrt{5}}{10}\,\large(\frac{3+\sqrt{5}}{2}\large)^N
\]
to be used in
\[
J_{R} \simeq \frac{\det B_{N-1}-2\det B_{N-2}-2}{\mathcal{Z}}
\]
which already proves that $J_{R}>0$, $\forall N\geq 4$, and $J_{R}=0$ when $N=3$; the direction is clockwise.  For the $N-$asymptotics we also need the normalization $\cal Z$.  In fact, ${\cal Z} \simeq \sum_{x\in \cal D}|M(x)| = \sum_{x\in \cal D}\det L_x$.  In \cite{louis} it is shown that 
\[
{\cal Z} \simeq 
2\,\large(\frac{3-\sqrt{5}}{2}\large)^N + 2\,\large(\frac{3-\sqrt{5}}{2}\large)^N - 4
\]
which concludes the proof by a simple computation. 
  \end{proof}
Note that adding particles and interactions we can get direction-reversal of currents as we had it in the previous sections \ref{mults}--\ref{stw}.  An experimentally accessible example is described in \cite{mol}. That constitutes another big challenge in the discussion of the direction of currents that has not been touched here, how density and interactions can modify it.

\section{Conclusion}

To discover what decides the direction of a current under steady nonequilibrium conditions is a major challenge of statistical mechanics.  In the present paper we have seen that many effects are possible, not in the least from the variation of time-symmetric parameters in transition rates defining the process.
That dependence on non-dissipative aspects provides thus a method to obtain kinetic parameters from measuring the direction of the current.\\

\noindent {\bf Acknowledgment}:  I thank Urna Basu for many discussions and help.


\end{document}